\title{Sparse Modeling of Intrinsic Correspondences}
\author{J. Pokrass$^{1}$, A. M. Bronstein$^{1}$, M. M. Bronstein$^{2}$, \\
       P. Sprechmann$^{3}$, and G. Sapiro$^{1}$
        \\
         $^1$School of Electrical Engineering, Tel Aviv University\\
         $^2$Faculty of Informatics, Universit\`{a} della Svizzera Italiana\\
         $^3$Department of Electrical and Computer Engineering, Duke University
       }
\newtheorem{proposition}{Proposition}
\newtheorem{lemma}{Lemma}
\begin{document}

\newcommand{\bb}[1]{\pmb{\mathrm{#1}}}
\newcommand{\Tr}{\mathrm{T}}

\newcommand{\prox}{\bb{P}}

\newcommand{\pp}{\bb{p}}
\newcommand{\qq}{\bb{q}}
\newcommand{\aaa}{\bb{a}}
\newcommand{\bbb}{\bb{b}}
\newcommand{\mm}{\bb{m}}
\newcommand{\nn}{\bb{n}}
\newcommand{\uu}{\bb{u}}
\newcommand{\vv}{\bb{v}}
\newcommand{\ww}{\bb{w}}
\newcommand{\xx}{\bb{x}}
\newcommand{\yy}{\bb{y}}
\newcommand{\ff}{\bb{f}}
\newcommand{\gggg}{\bb{g}}

\newcommand{\mmu}{\bb{\mu}}
\newcommand{\nnu}{\bb{\nu}}
\newcommand{\rrho}{\bb{\rho}}
\newcommand{\ssigma}{\bb{\sigma}}
\newcommand{\ttau}{\bb{\tau}}
\newcommand{\ttheta}{\bb{\theta}}
\newcommand{\ppi}{\bb{\pi}}
\newcommand{\bbeta}{\bb{\beta}}

\newcommand{\PPi}{\bb{\Pi}}
\newcommand{\PPsi}{\bb{\Psi}}
\newcommand{\PPhi}{\bb{\Phi}}

\newcommand{\ones}{\bb{1}}
\newcommand{\zeros}{\bb{0}}

\newcommand{\Ee}{\bb{E}}
\newcommand{\Pp}{\bb{P}}
\newcommand{\Qq}{\bb{Q}}
\newcommand{\Aa}{\bb{A}}
\newcommand{\Bb}{\bb{B}}
\newcommand{\Cc}{\bb{C}}
\newcommand{\Mm}{\bb{M}}
\newcommand{\Nn}{\bb{N}}
\newcommand{\Xx}{\bb{X}}
\newcommand{\Yy}{\bb{Y}}
\newcommand{\Vv}{\bb{V}}
\newcommand{\Gg}{\bb{G}}
\newcommand{\Ii}{\bb{I}}
\newcommand{\Dd}{\bb{D}}
\newcommand{\Ww}{\bb{W}}
\newcommand{\Rr}{\bb{R}}
\newcommand{\Ss}{\bb{S}}
\newcommand{\Ll}{\bb{L}}
\newcommand{\Oo}{\bb{O}}

\newcommand{\mypara}[1]{{\noindent \bf{#1.} }}

\newcounter{ALC@tempcntr}
\newcommand{\LCOMMENT}[1]{%
    \setcounter{ALC@tempcntr}{\arabic{ALC@rem}}
    \setcounter{ALC@rem}{1}
    \item \{#1\}
    \setcounter{ALC@rem}{\arabic{ALC@tempcntr}}
}%
\newcommand{\BREAKLINE}{%
    \setcounter{ALC@tempcntr}{\arabic{ALC@rem}}
    \setcounter{ALC@rem}{1}
    \item
    \setcounter{ALC@rem}{\arabic{ALC@tempcntr}}
}%

\maketitle
\begin{abstract}

We present a novel sparse modeling approach to non-rigid shape matching using only the ability to detect repeatable regions. As the input to our algorithm, we are given only two sets of regions in two shapes; no descriptors are provided so the correspondence between the regions is not know, nor we know how many regions correspond in the two shapes.
We show that even with such scarce information, it is possible to establish very accurate correspondence between the shapes by using methods from the field of sparse modeling, being this, the first non-trivial use of sparse models in shape correspondence.
We formulate the problem of {\em permuted sparse coding}, in which we solve simultaneously for an unknown permutation ordering the regions on two shapes and for an unknown correspondence in functional representation. We also propose a robust variant capable of handling incomplete matches.
Numerically, the problem is solved efficiently by alternating the solution of a linear assignment and a sparse coding problem.
The proposed methods are evaluated qualitatively and quantitatively on standard benchmarks containing both synthetic and scanned objects.

\end{abstract}

\section{Introduction}

Matching of deformable shapes is a notoriously difficult problem playing an important role in many applications \cite{kaick2010survey}. 
Unlike rigid matching where the correspondence can be parametrized by a small number of parameters (rotation and translation of one shape w.r.t. the other \cite{ChenMedioni:91:ICP,BeslMcKay:92:ICP}), non-rigid matching  typically uses point-wise representation of correspondence, which results in the number of degrees of freedom growing exponentially with the number of matched points.

Non-rigid correspondence methods try to find correspondence by minimizing some structure distortion. 
The structures can be point-wise (local descriptors \cite{zaharescu-surface,sunHKS,gebal:andreas:etal:adf:09,aubry2011wave}), pair-wise (distances \cite{ela:kim:FLATTEN,Memoli05,bro:bro:kim:PNAS,bro-ghf}), or higher order \cite{zeng2010dense}.

In order to make the matching problem computationally feasible, it is crucial to reduce the size of the search space \cite{tevs2011intrinsic}. 
%
%
Most methods use a combination of point- and pair-wise structure matching in order to achieve this, and  
%
%
typically consist of three main components: feature detection, feature description, and regularization. 
Given two shapes, a {\em feature detector} allows to find a set of landmarks (points or regions) that are repeatable, i.e., appear (possibly with some inaccuracy) on both shapes. 
{\em Feature descriptor} then assigns to each feature a vector capturing some local geometric properties of the shape; very often, the two processes are combined into a single one. 
Using the descriptors, landmarks on two shapes can be matched (it has been shown \cite{ovsjanikov2010one} that under some conditions, correct landmark matching fully determines the intrinsic correspondence between the shapes). 
Such a matching reduces the search space size to points with similar descriptors. 
However, since the matching uses only local information, such correspondence can be noisy, and some kind of {\em regularization} based on higher-order information is needed to rule out bad or inconsistent correspondences. 
This information is also used to establish the correspondence between the rest of the points on the shapes. 
Often, the process is applied hierarchically, restricting the candidate matches to points in the proximity of the landmarks \cite{sahillioglu2012}. 

Computer graphics and geometry processing literature contains a plethora of approaches for each of the aforementioned components. 
Feature detection methods try to locate stable points or regions \cite{digne2010level,litman:BB:11} that are invariant under isometric deformations and robust to noise.  
Popular feature descriptors include the heat kernel signature (HKS) \cite{sunHKS,gebal:andreas:etal:adf:09},  
wave kernel signature (WKS) \cite{aubry2011wave}, global point signature (GPS) \cite{Rustamov07} or methods adopted from the domain of image analysis \cite{zaharescu-surface}.
As regularization, pairwise structures such as geodesic \cite{Memoli05,bro:bro:kim:PNAS} or diffusion distances \cite{bro-ghf} and higher-order structures \cite{zeng2010dense} have been used.

Alternatively, there have been several attempts to represent correspondences with a small set of parameters.  
Elad and Kimmel \cite{ela:kim:FLATTEN} 
used multidimensional scaling (MDS)-type methods to embed the intrinsic structure of the shapes into a low-dimensional Euclidean space, posing the problem of non-rigid matching as a problem of rigid matching of the corresponding embeddings (``canonical forms''). 
Mateus et al. \cite{Mateus08} used spectral embeddings instead of MDS. 
Lipman and Funkhouser \cite{Lipman09} embedded the shapes into a disk by means of conformal maps and represented the correspondence as a M{\"o}bius transformation.


More recently, Ovsjanikov et al. \cite{ovsjanikov2012functional} introduced the functional representation of correspondences, allowing to perform a ``calculus'' of correspondences. 
In this approach, correspondence is modeled as a correspondence between functions on two shapes rather than points, and can be compactly represented in the Laplace-Beltrami eigenbasis \cite{levy2006laplace} as a matrix of coefficients of decomposition of the basis functions of the first shape in the basis of the second one.  
In this paper, we will be relying upon this latter representation. 
%

\subsection{Main contribution}

The main practical contribution of this paper is an approach for finding dense intrinsic correspondence between near-isometric shapes with very little known information: we only assume to be able to detect regions in two shapes in a repeatable enough way (i.e., that at least some regions in one shape correspond accurately enough to some other regions in another shape). No region descriptors are given, so the correspondence of the regions is unknown. 
The assumption of near-isometry assures that in the functional representation of \cite{ovsjanikov2012functional}, the unknown correspondence can be represented as a sparse matrix. The assumption of repeatable regions implies that there exists some unknown permutation that orders the regions according to their correspondence.

We formulate the problem of {\em permuted sparse coding}, in which we simultaneously look for the permutation and the correspondence, thereby introducing the very successful area of sparse modeling into efficient and state-of-the-art shape correspondence. 
We note that with the permutation fixed, our problem becomes the standard sparse coding problem; having the correspondence fixed, the problem becomes a linear assignment. 
This allows efficient numerical solution by alternating the two aforementioned problems and employing efficient solvers that exist for both.

%

Our method relies on a pretty common assumption that the shapes are nearly-isometric (though our experimental results show our approach still works even when departing from this assumption), and out of all methods we are aware of, it uses perhaps the scarcest amount of data to establish dense correspondence between the shapes. 
%
For example, sandard region detectors with high repeatability such as \cite{litman:BB:11} are sufficient.

Compared to recent techniques for region-wise shape matching (see, e.g., \cite{golovinskiy2009consistent,van2011prior,huang2011joint,pokrass2012correspondence}), our approach has several important practical advantages: First, we do not use any feature descriptor. Second, most region-wise correspondence approaches require an additional step of extending the correspondence between matched regions to the rest of the points.


The rest of the paper is organized as follows. 
In Section~2, we overview the functional representation of correspondences, allowing to work with correspondences as algebraic structures, and state the main notions in sparse modeling. 
In Section~3, we formulate our problem of permuted sparse coding for establishing correspondence from a set of repeatable regions given in unknown order. We then extend the problem to the general setting where the region detection process is not perfectly repeatable. 
In Section~4, we describe the numerical optimization used to solve our permuted sparse coding  problem.  
Experimental results are shown in Section~5. Finally, Section~6 discusses the limitations and possible extensions of the proposed framework and concludes the paper.

\begin{figure*}[t!]
   \begin{center}
      \begin{overpic}
	[width=1\linewidth,trim=0 140 0 0,clip=true]{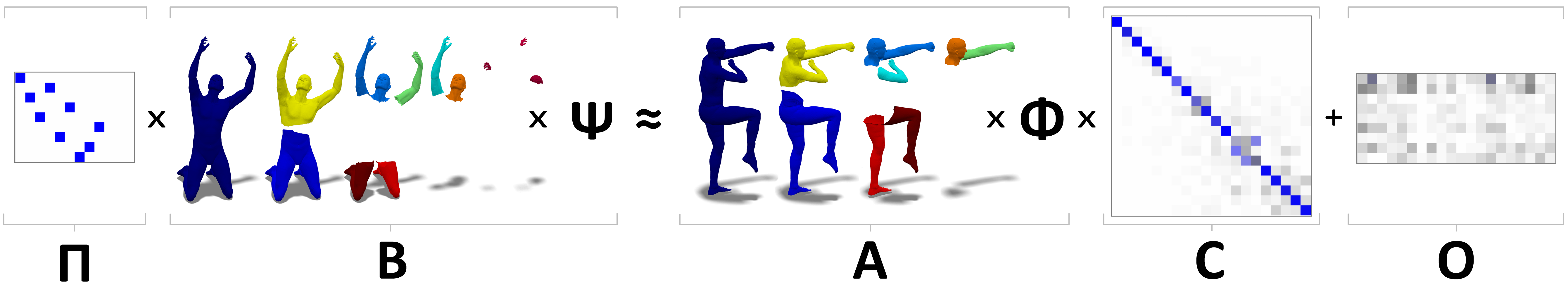}
	\put(4,-1.5){$\bb{\Pi}$}
	\put(22,-1.5){$\bb{B}$}
	\put(52.5,-1.5){$\bb{A}$}
	\put(76.5,-1.5){$\bb{C}$}
	\put(92.25,-1.5){$\bb{O}$} 		
\end{overpic}\vspace{3mm}
   \caption{ Near isometric shape correspondence as a sparse modeling problem (see details in text): Indicator functions
  of repeatable regions on two shapes are detected and represented as matrices of coefficients $\bb{A}$ and $\bb{B}$
  in the corresponding orthonormal harmonic bases $\bb{\Phi}$ and $\bb{\Psi}$. 
   When the regions 
   are brought into correspondence, the point-to-point correspondence between the shapes can be encoded by an approximately
   diagonal matrix $\bb{C}$. In the proposed procedure termed as \emph{permuted sparse coding}, we solve $\bb{\Pi}\bb{B} = \bb{A}\bb{C} + \bb{O}$ simultaneously for an approximately diagonal $\bb{C}$ and the permutation $\bb{\Pi}$ bringing the indicator functions into correspondence.
   To cope with imperfectly matching regions, we relax the surjectivity of the permutation and absorb the mismatches into a row-wise sparse outlier matrix $\bb{O}$. For visualization purposes, the coloring of the regions is
   consistent as after the application of the permutation. 
   Correspondence is shown between a synthetic TOSCA and scanned SCAPE shape.
   }
   \label{fig:teaser}
   \end{center}
\end{figure*}

\section{Background}

\begin{figure*}[t!]
   \begin{center}
   \includegraphics[width=\linewidth]{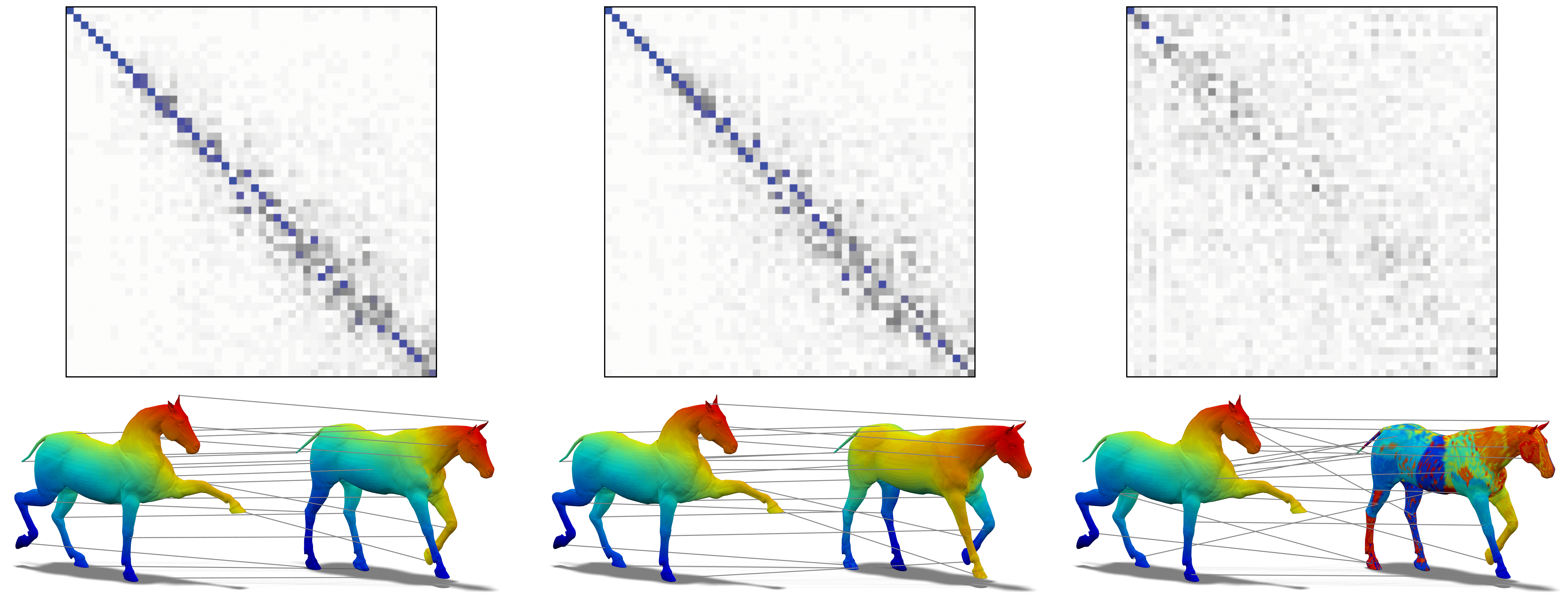}
   \caption{ 
    Top row: representation of different maps between the two shapes using the matrix $\bb{C}$. Shown left-to-right are
    the ideal correspondence, a symmetric correspondence, and a random correspondence.
    Bottom row: representation of the same correspondences as point-to-point maps. Note that the farther is the correspondence from an isometry,
    the less diagonal is the matrix $\bb{C}$.
    }
   \label{fig:funcorr}
   \end{center}
\end{figure*}

\begin{figure*}[t!]
   \begin{center}
   \includegraphics[width=1\linewidth]{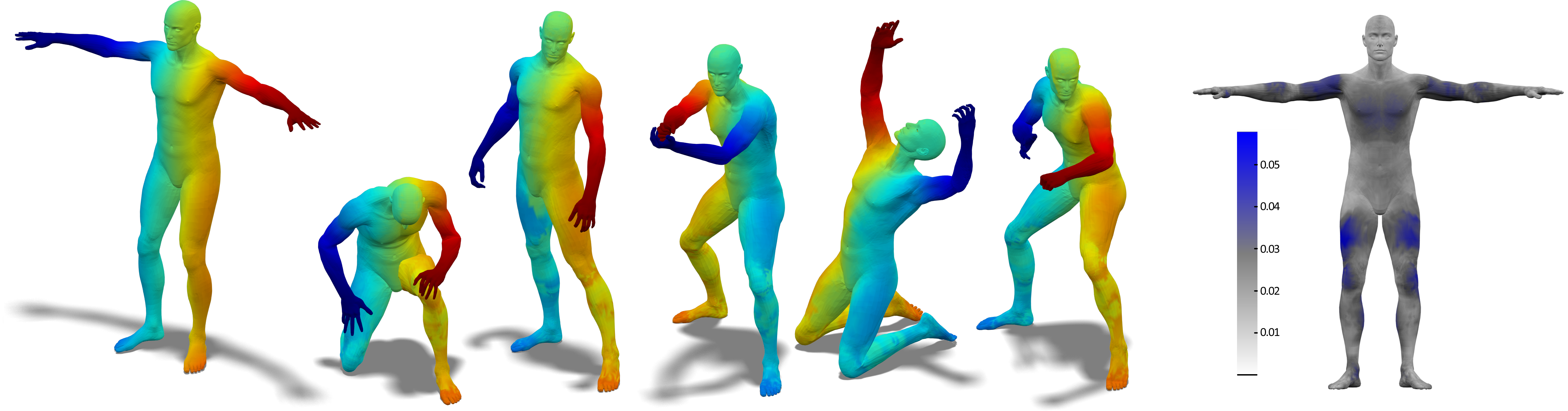}
   \caption{Dense point-to-point correspondences obtained between the left TOSCA human shape and its approximate isometries. Corresponding points are marked with consistent colors.
   The average correspondence distortion is depicted
   in units of the reference shape diameter. The highest distortions are obtained on the non-isometric joints, but do not exceed $6\%$ of the diameter. }
   \label{fig:tosca}
   \end{center}
\end{figure*}



\subsection{Functional representation of correspondences}

The direct representation of correspondences as maps between two non-Euclidean spaces limits the range of tools that can be employed for correspondence computation
due to the lack of an algebraic structure. In this paper, we rely on the functional representation of correspondences introduced in \cite{ovsjanikov2012functional}, which overcomes this limitation. In what follows, we briefly review the main idea of such functional representations.

Let $X$ and $Y$ be two shapes, modeled as compact smooth Riemannian manifolds,
related by a bijective correspondence $t : X \rightarrow Y$.
Then, for any real function $f : X \rightarrow \mathbb{R}$, we can construct a corresponding function
$g : Y \rightarrow \mathbb{R}$ as $g = f \circ t^{-1}$.
The correspondence $t$ uniquely defines a mapping between two function spaces $T : \mathcal{F}(X,\mathbb{R}) \rightarrow \mathcal{F}(Y,\mathbb{R})$,
where $\mathcal{F}(X,\mathbb{R})$ denotes the space of real functions on $X$.
Such a representation is linear, since for every pair of functions $f_1, f_2$ and scalars $\alpha_1, \alpha_2$,
\begin{align}
T(\alpha_1f_1+\alpha_2f_2) &= (\alpha_1f_1 + \alpha_2f_2) \circ t^{-1}
\nonumber\\               &= \alpha_1f_1 \circ t^{-1}+\alpha_2f_2\circ t^{-1}
\nonumber\\               &= \alpha_1T(f_1) + \alpha_2T(f_2).
\end{align}

Assuming that $X$ is equipped with a basis $\{\phi_i\}_{i\ge 1}$, any $f : X \rightarrow \mathbb{R}$ can be represented as
\begin{equation}
  f = \sum_{i\ge 1} a_i\phi_i
\end{equation}
with the $a_i$ being some representation coefficients (in case of an orthonormal
basis, $a_i = \langle f,\phi_i \rangle$; in the general case, the coefficients are found by projecting the
function $f$ on the bi-orthonormal basis).
Due to the linearity of $T$,
\begin{equation}
T(f) = T\left (\sum_{i\ge1}a_i\phi_i \right ) = \sum_{i\ge1}a_i T(\phi_i)
\end{equation}
If the shape $Y$ is further equipped with a basis $\{\psi_j\}_{j\ge 1}$, then for every $i$ there exists coefficients $c_{ij}$ such that
\begin{equation}
  T(\phi_i) = \sum_{j\ge 1}c_{ij} \psi_j,
\end{equation}
and we can write
\begin{equation}
  T(f) = \sum_{i,j\ge 1}a_i c_{ij} \psi_j.
  \label{eq:Tf}
\end{equation}

Let us now assume finite sampling of $X$ and $Y$, with $m$ samples (for simplicity, we assume that the shapes are sampled at the same number of samples $m$. The extension to the case with a different number of samples is straightforward). The bases are represented as the $m \times n$ matrices $\bb{\Phi}$ and $\bb{\Psi}$ containing, respectively, $n$ discretized functions $\phi_i$ and $\psi_j$ as the columns. The functions $f$ and $g = T(f)$ can now be represented as $n$-dimensional vectors $\bb{f} = \bb{\Phi} \bb{a}$ and $\bb{g} = \bb{\Psi} \bb{b}$ with the coefficients $\bb{a}$ and $\bb{b}$.
Using this notation, Equation (\ref{eq:Tf}) can be rewritten as
\begin{equation}
\bb{\Psi} \bb{b} = T(\bb{\Phi}\bb{a}) = \bb{\Psi} \bb{C}^\Tr \bb{a};
\end{equation}
since $\bb{\Psi}$ is invertible, this simply means that
\begin{equation}
\bb{b}^\Tr = \bb{a}^\Tr \bb{C}.
\label{eq:b-aC}
\end{equation}
Thus, the $n \times n$ matrix $\bb{C}$ fully encodes the linear map $T$ between the functional spaces, and contains the coordinates in the basis $\bb{\Psi}$ of the mapped elements of the basis $\bb{\Phi}$.

\subsection{Point-to-point correspondence}
\label{sec:p2p}

%
Point-to-point correspondences assume that each point $i$ on $X$ corresponds to some point $j$ on $Y$.
In functional representation, this is equivalent to having
%
$\bb{C}$ that makes each row of $\bb{\Psi}\bb{C}^\Tr$ coincide with some row of $\bb{\Phi}$ \cite{ovsjanikov2012functional}.
In applications requiring point-to-point correspondence, given some $\bb{C}$, it can be converted into a point-to-point correspondence by thinking of $\bb{\Phi}$ and $\bb{\Psi}$ as $n$-dimensional points clouds, and orthogonal matrix $\bb{C}$ as a rigid alignment transformation between them.
%
%
%
This procedure is equivalent to iterative closest point (ICP) in $n$ dimensions \cite{ovsjanikov2012functional}, initialized with the given $\bb{C}_0$: first, for each row $i$ of $\bb{\Psi}\bb{C_0}^\Tr$, find the closest row $j^*_i$ in $\bb{\Phi}$ (this operation can be performed efficiently using approximate nearest neighbor algorithms).
Then, find orthonormal $\bb{C}$ minimizing $\sum_i \| \bb{\Phi}_{j^*_i} -\bb{\Psi}\bb{C}^\Tr \|_2$ and set $\bb{C}_0 = \bb{C}$.
This operation is repeated until convergence and can be performed efficiently over all the vertexes of $X$ and $Y$ using approximate nearest neighbor algorithms.

\subsection{Sparse modeling}

One of the main tools that will be used in this paper are \emph{sparse models}. In what follows, we give a very brief overview of this vast field, and refer the
reader to \cite{elad2010sparse} for a comprehensive treatise.
The central assertion of sparse modeling is that many families of signals (and later operations as here introduced) can be represented as a sparse linear combination in an appropriate domain, usually referred to as the \emph{dictionary}.
This can be written as $\bb{x} \approx \bb{D}\bb{z}$, where $\bb{x}$ denotes the signal, $\bb{D}$ the dictionary, and $\bb{z}$ the sparse vector of representation coefficients. The dictionary is often selected to be \emph{overcomplete}, i.e., with more columns than rows.

Finding the representation of a signal $\bb{x}$ in a given dictionary $\bb{D}$ is usually referred to as sparse representation \emph{pursuit} or \emph{sparse coding}. Among the variety of pursuit methods, we will focus on the so-called Lasso formulation \cite{tibshirani96} that finds $\bb{z}$ as the solution to the unconstrained convex program
\begin{eqnarray}
\min_{\bb{z}} \|\bb{x}-\bb{D}\bb{z}\|_2^2 + \lambda \|\bb{z}\|_1.
\label{eq:lasso}
\end{eqnarray}
The first term is the data fitting term, while the second term involving $\ell_1$ norm, $\|\bb{z}\|_1 = |z_1|+\hdots+|z_n|$,
promotes a sparse solution; the parameter $\lambda$ controls the relative importance of the latter.
Proximal splitting methods \cite{nesterov07} are among the most efficient and most frequently used numerical tools to solve problem (\ref{eq:lasso}); in Section~\ref{sec:numerics}, we present a variant of the proximal splitting algorithms for the solution of the pursuit problem arising in shape correspondence as detailed in the sequel.

In some cases, signals not admitting the simplistic model of element-wise sparsity still manifest more intricate types of \emph{structured} sparsity. In structured sparse models, the non-zero elements of $\bb{z}$ come in groups or, more generally, in hierarchies of groups. A common class of structured pursuit problems can be formulated as convex programs of the form
\begin{eqnarray}
\min_{\bb{z}} \|\bb{x}-\bb{D}\bb{z}\|_2^2 + \lambda \|\bb{z}\|_{1,2},
\end{eqnarray}
where the $\ell_{1,2}$ norm, $\|\bb{z}\|_{1,2} = \|\bb{z}_{1}\|_2 + \cdot +\|\bb{z}_{k}\|_2$, assumes that the vector $\bb{z}$ is decomposed into $k$ non-overlapping sub-vectors $\bb{z}_i$, and promotes group-wise sparse solutions (i.e., the solution will have a small number of groups with non-zero coefficients, but the sub-vectors representing each such non-zero group will be dense).

While structured sparse models enforce group structure of each representation vector independently, it is often useful to consider the structure shared by multiple vectors. \emph{Collaborative} sparse models operate on data matrices $\bb{X}$, in which each column corresponds to a data vector, and assert that the patterns of non-zero coefficients are shared across the corresponding representation vectors, $\bb{Z}$. This is achieved by solving a pursuit problem of the form
\begin{eqnarray}
\min_{\bb{Z}} \|\bb{X}-\bb{D}\bb{Z}\|_{\mathrm{F}}^2 + \lambda \|\bb{Z}\|_{2,1},
\end{eqnarray}
where the first term involving the Frobenius norm serves as the data fitting term, and the second term with the $\ell_{2,1}$ norm promotes row-wise sparsity of the solution. The $\ell_{2,1}$ norm is defined as
$\|\bb{Z}\|_{2,1} = \|\bb{z}_1^\Tr\|_{2} + \cdots + \|\bb{z}_m^\Tr\|_{2}$, where $\bb{z}_i^\Tr$ denotes the $i$-th row of $\bb{Z}$ (note the difference from the $\ell_{1,2}$ column-wise counterpart!).

In this paper, we use formulate the shape correspondence problem using a sparse model, and use sparse modeling tools to efficiently solve it.

\section{Sparse modeling of correspondences}

\begin{figure*}[t!]
   \begin{center}
   \includegraphics[width=.9\linewidth]{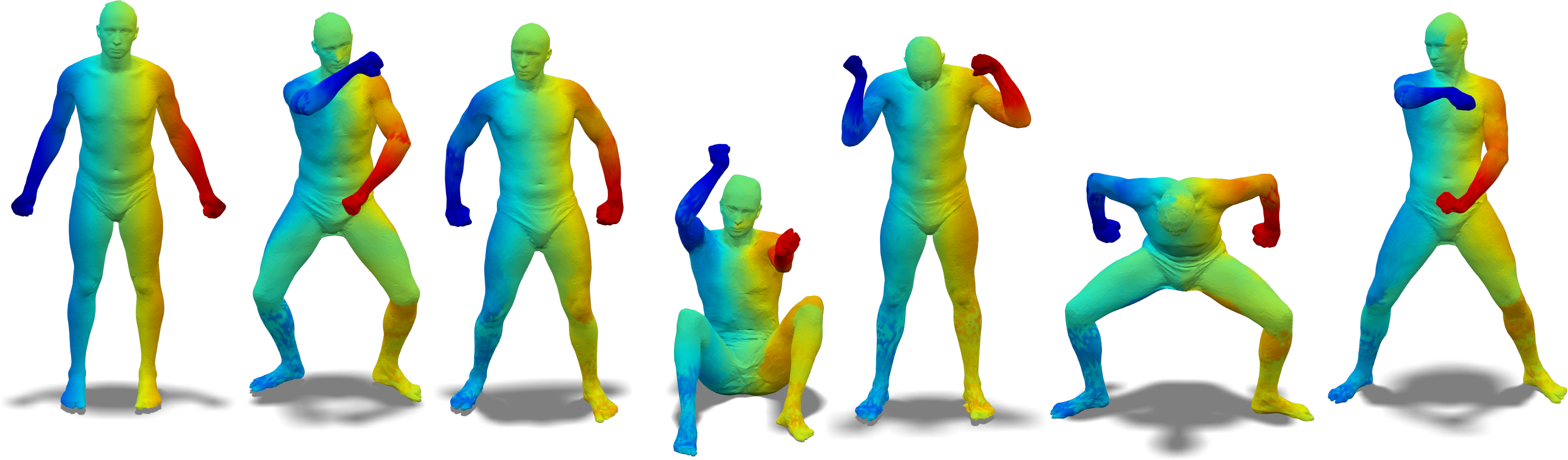}
   \caption{Dense point-to-point correspondences obtained between the left SCAPE human shape and various other poses. Corresponding points are marked with consistent colors. }
   \label{fig:scape}
   \end{center}
   \vspace{-1ex}
\end{figure*}

\begin{figure*}[t!]
   \begin{center}
   \includegraphics[width=.95\linewidth]{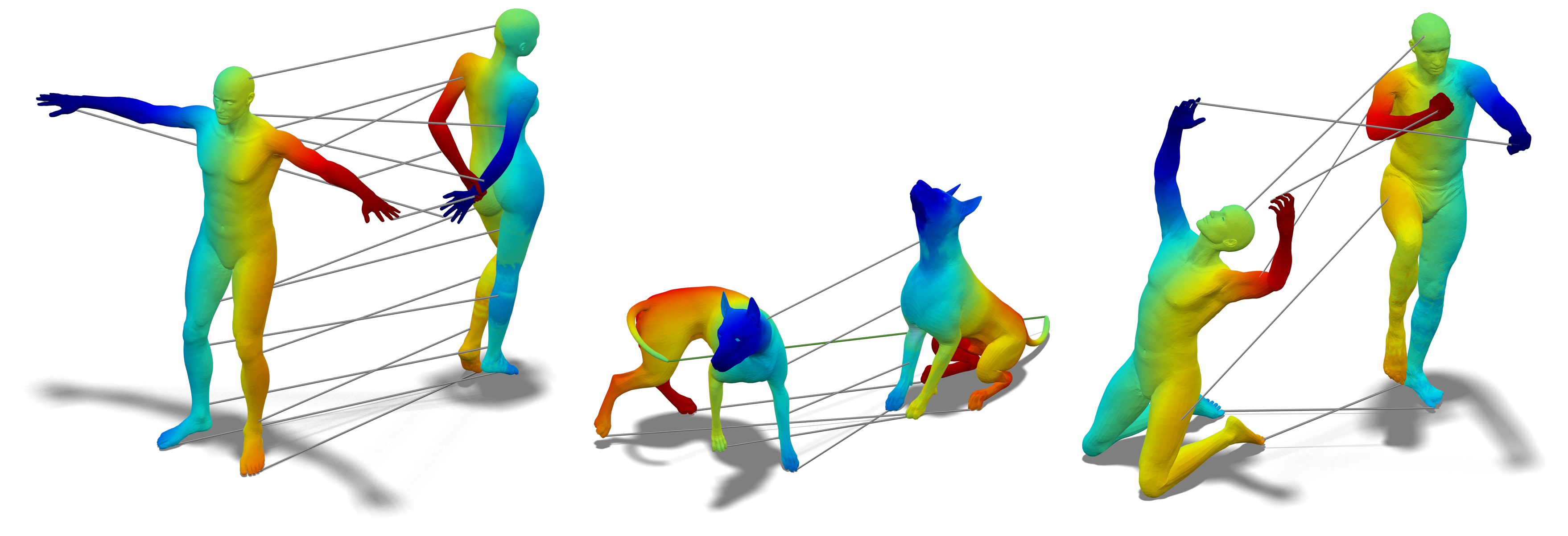} \\
   \includegraphics[width=\linewidth]{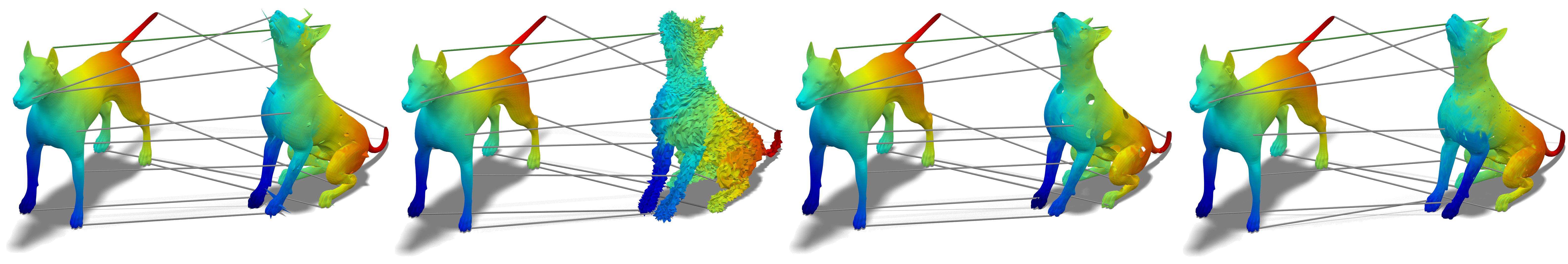}
   \caption{ First row: point-to-point correspondences obtained between different non-isometric shapes: male and female (left); two strongly non-isometric
   deformations of the dog shape from the TOSCA set (middle); TOSCA and SCAPE human shapes (right).
   Second row: Point-to-point correspondences obtained between SHREC shapes undergoing nearly isometric deformations and (from left to right)
   spike noise, Gaussian noise, and topological noise in the form of large and small holes.    }
   \label{fig:p2p}
   \end{center}
   \vspace{-1ex}
\end{figure*}


In case the shapes $X$ and $Y$ are isometric and the corresponding Laplace-Beltrami operators have simple spectra (no eigenvalues with multiplicity greater than one), the harmonic bases (Laplacian eigenfunctions) have a compatible behavior, $\psi_i = T(\phi_i)$ such that $c_{ij} = \pm \delta_{ij}$.
%
Choosing the 
discretized eigenfunctions of the Laplace-Beltrami operator 
as $\bb{\Phi}$ and $\bb{\Psi}$ causes every low-distortion correspondence being represented by a nearly diagonal, and therefore very sparse, matrix $\bb{C}$.

In practice, due to lack of perfect isometry and numerical inaccuracies, the diagonal structure of $\bb{C}$ is manifested for the first eigenfunctions corresponding to the small eigenvalues (low frequencies), and is gradually lost with the increase of the frequency (Figure~\ref{fig:funcorr}).
However, a correspondence with low metric distortion will usually be represented by a sparse $\bb{C}$.
We use this property to formulate the computation of correspondences in terms of a sparse representation pursuit problem.

Let us assume to have some {\em region} (or {\em feature}) {\em detection} process that given a shape $X$ produces a collection of functions $\{ f_i : X \rightarrow \mathbb{R} \}$ based on the intrinsic properties of the shape only.  For example, the $f_i$'s can be indicator functions of the maximally stable components (regions) of the shape \cite{litman:BB:11}. Since the process is intrinsic, given a nearly isometric deformation $Y$ or $X$, it should produce a collection of similar functions $\{ g_j : Y \rightarrow \mathbb{R} \}$.

To simplify the presentation, let us assume that the process is perfectly {\em repeatable} in the sense that it finds $q$ functions on $X$ and $Y$, such that for every $f_i$ there exists a $g_j = f_i \circ t$ related by the unknown correspondence $t$. We stress that the ordering of the $f_i$'s and $g_j$'s is \emph{unknown}, i.e., we do not know to which $g_j$ in $Y$ a $f_i$ in $X$ correspond. This ordering can be expressed by an unknown $q \times q$ permutation matrix $\bb{\Pi}$ (in Section~3.2, we consider the more general case when the number of functions detected on $X$ and $Y$ can be different, i.e., $\bb{\Pi}$ is non-square).

Representing the functions in the bases on each shape, we have $\bb{f}_i = \bb{\Phi} \bb{a}_i$ and $\bb{g}_j = \bb{\Psi} \bb{b}_j$.
Since each pair of corresponding $\bb{f}_i$ and $\bb{g}_j$ shall satisfy (\ref{eq:b-aC}), we can write
\begin{eqnarray}
\bb{\Pi} \bb{B} &=& \bb{A} \bb{C},
\label{eq:sparse-rel}
\end{eqnarray}
where $\bb{A}$ and $\bb{B}$ are the $q \times n$ matrices containing, respectively, $\bb{a}_i^\Tr$ and $\bb{b}_j^\Tr$ as the rows, and
$\pi_{ij} = 1$ if $\bb{a}_i$ corresponds to $\bb{b}_j$ and zero otherwise.

\subsection{Permuted sparse coding}

Note that in relation (\ref{eq:sparse-rel}), both $\bb{\Pi}$ and $\bb{C}$ are unknown, and solving for them is a highly ill-posed problem.
However, by recalling that the correspondence we are looking for should be represented by a nearly-diagonal $\bb{C}$, we formulate the following problem
\begin{eqnarray}
\min_{\bb{C},\bb{\Pi}} \frac{1}{2} \| \bb{\Pi} \bb{B} - \bb{A} \bb{C}\|_\mathrm{F}^2
+ \lambda \| \bb{W} \odot \bb{C} \|_1,
\label{eq:prob_sparse}
\end{eqnarray}
where the minimum is sought over $n \times n$ matrices $\bb{C}$ (capturing the correspondence $t$ between the shapes in the functional representation) and $q \times q$ permutations $\bb{\Pi}$ (capturing the correspondence between the detected regions on the shapes).
The first term containing the Frobenius norm can be interpreted as the data term, while the second term, containing the weighted $\ell_1$ norm promotes a sparse $\bb{C}$;  $\odot$ denotes element-wise multiplication, and the non-negative parameter $\lambda$ determines the relative importance of the penalty.
Small weights $w_{ij}$ in $\bb{W}$ are assigned close to the diagonal, while larger weights are selected for the off-diagonal elements. This promotes diagonal solutions.

The solution of (\ref{eq:prob_sparse}) can be obtained using alternating minimization iterating over $\bb{C}$ with fixed $\bb{\Pi}$,
and $\bb{\Pi}$ with fixed $\bb{C}$. Note that with fixed $\bb{\Pi}$, we can denote
$\bb{B}' = \bb{\Pi} \bb{B}$
and reduce problem (\ref{eq:prob_sparse}) to
\begin{eqnarray}
\min_{\bb{C}} \frac{1}{2} \| \bb{B}' - \bb{A} \bb{C}\|_\mathrm{F}^2
+ \lambda \| \bb{W} \odot \bb{C} \|_1,
\label{eq:prob_sparse1}
\end{eqnarray}
which resembles the Lasso problem frequently employed for the pursuit of sparse representations.
On the other hand, when $\bb{C}$ is fixed, we set
$\bb{A}' = \bb{A} \bb{C}$, reducing the optimization objective to
\begin{eqnarray}
\lefteqn{\| \bb{\Pi} \bb{B} - \bb{A}'\|_\mathrm{F}^2 \ = \  } \\
&& \mathrm{tr}\left( \bb{B}^\Tr \bb{\Pi}^\Tr \bb{\Pi} \bb{B} \right)
-2 \mathrm{tr}\left( \bb{B}^\Tr \bb{\Pi}^\Tr \bb{A}' \right) + \mathrm{tr}\left( \bb{A}^{\prime\Tr} \bb{A}' \right). \nonumber
\end{eqnarray}
Since $\bb{\Pi}$ is a permutation matrix, $\bb{\Pi}^\Tr \bb{\Pi} = \bb{I}$, and the only non-constant term remaining in the objective is the second linear term.
Problem (\ref{eq:prob_sparse}) thus becomes
\begin{eqnarray}
\max_{\bb{\Pi}} \mathrm{tr}\left( \bb{\Pi}^\Tr  \bb{E}  \right),
\end{eqnarray}
where
$\bb{E} = \bb{A}' \bb{B}^\Tr = \bb{A} \bb{C} \bb{B}^\Tr$
and the maximization is performed over permutation matrices.
To make it practically solvable, we allow $\bb{\Pi}$ to be a double-stochastic matrix, which yields the following linear assignment problem:
\begin{eqnarray}
\max_{\bb{\Pi} \ge \bb{0}} ~ \mathrm{vec}(\bb{E})^\Tr \mathrm{vec}(\bb{\Pi}) & \mathrm{s.t.} & \left\{ \begin{array}{l}
                                                                                                          \bb{\Pi} \bb{1} = \bb{1} \\
                                                                                                          \bb{\Pi}^\Tr \bb{1} = \bb{1}.
                                                                                                        \end{array} \right.
\label{eq:prob_linearass}
\end{eqnarray}

We refer to problem (\ref{eq:prob_sparse}) as to \emph{permuted sparse coding}, and propose to solve it by alternating the solution of the standard
sparse coding problem (\ref{eq:prob_sparse1}) and the solution of the linear assignment problem (\ref{eq:prob_linearass}).
The sparsity constraint has a regularization effect on this, otherwise extremely ill-posed, problem,
%
and the following strong property holds:
\begin{proposition}
The process alternating subproblems (\ref{eq:prob_sparse1}) and (\ref{eq:prob_linearass}) converges to a global minimizer of the permuted sparse coding problem (\ref{eq:prob_sparse}).
\end{proposition}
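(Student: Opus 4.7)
My plan is to view the alternating scheme as block coordinate descent on a \emph{jointly} convex objective over a convex feasible set, and then to invoke a standard convergence theorem for such schemes.

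The first step is to note that, once the permutation constraint is relaxed to the Birkhoff polytope as the authors have already done in (\ref{eq:prob_linearass}), the feasible set is the product of a convex, compact polytope and $\mathbb{R}^{n\times n}$, and the objective
$F(\bb{\Pi},\bb{C}) = \tfrac{1}{2}\|\bb{\Pi}\bb{B}-\bb{A}\bb{C}\|_\mathrm{F}^2 + \lambda \|\bb{W}\odot\bb{C}\|_1$
is jointly convex in $(\bb{\Pi},\bb{C})$: the data term is the squared Frobenius norm of an affine function of $(\bb{\Pi},\bb{C})$, and the $\ell_1$ penalty is convex in $\bb{C}$ alone. The critical structural observation is that the only non-smooth part is \emph{block-separable} across the two blocks, since it depends on $\bb{C}$ only. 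This places the iteration in the exact setting of Tseng's theorem on block coordinate descent for a smooth convex function plus a block-separable non-smooth convex function with each block minimized exactly, from which every cluster point of the iterates is a global minimizer of the relaxed convex program.

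The second step is to verify Tseng's hypotheses. The Birkhoff polytope is compact; the $\ell_1$ penalty together with the data term makes the sublevel sets of $F$ in $\bb{C}$ bounded (assuming positive entries of $\bb{W}$, as the authors do), so the iterates live in a compact set and cluster points exist. The $\bb{\Pi}$-subproblem minimum is not necessarily unique, but this is precisely the degeneracy Tseng's theorem tolerates because the non-smooth term is independent of $\bb{\Pi}$. As a fallback one can give a direct proof along the lines of: monotone decrease of $F$, the convex subproblems' first-order optimality conditions passed to a limit point, and joint convexity promoting these local conditions to a global optimum.

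To close the loop with the combinatorial problem (\ref{eq:prob_sparse}), I would observe that the $\bb{\Pi}$-subproblem is a linear program over the Birkhoff polytope and hence admits an optimum at a vertex, i.e., a \emph{genuine} permutation, so the iterates remain feasible for (\ref{eq:prob_sparse}) throughout. Since the combinatorial feasible set sits inside the relaxed one, the relaxed optimum lower-bounds the combinatorial one; any limit point $(\bb{\Pi}^\star,\bb{C}^\star)$ attaining the relaxed optimum with $\bb{\Pi}^\star$ a permutation must therefore attain the combinatorial optimum as well. The main obstacle I anticipate is the careful verification of Tseng's regularity assumptions in the presence of the degenerate LP subproblem and the non-smooth $\ell_1$ term; this is exactly where the block-separability of the non-smooth part is essential, and the direct limit-point argument above serves as a cleaner backup.
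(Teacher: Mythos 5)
Your proposal is correct and follows essentially the same route as the paper's proof: relax the permutations to doubly-stochastic matrices, note that the relaxed problem is jointly convex with the non-smooth term confined to the $\bb{C}$-block, invoke Tseng's block-coordinate-descent theorem to obtain a global minimizer of the relaxation, and then use integrality of the assignment subproblem to transfer the conclusion back to the original combinatorial problem. The only divergence is in how integrality is justified: you appeal to the LP attaining its optimum at a vertex of the Birkhoff polytope (a permutation, via Birkhoff--von Neumann), whereas the paper proves total unimodularity of the constraint matrix explicitly (its Lemma~1); both versions share the same minor caveat that the solver must actually return such a vertex/integral optimum when the LP solution is non-unique.
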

Due to lack of space, we provide the proof in the Appendix. This result means, among other, that despite the relaxation of the permutation matrix to a double-stochastic matrix in the assignment subproblem (\ref{eq:prob_linearass}), we are actually recovering a true permutation matrix. This follows from the total unimodularity of the constraints in (\ref{eq:prob_linearass}).

\subsection{Robust permuted sparse coding}

So far, we have assumed the existence of a bijective, albeit unknown, correspondence between the $f_i$'s and the $g_j$'s. In practice, the process detecting these functions (e.g., stable regions) is often not perfectly repeatable. In what follows, we will make a more realistic assumption that $q$ functions $f_i$ are detected on $X$, and $r$ functions $g_j$ detected on $Y$ (without loss of generality, $q \le r$), such that some $f_i$'s have no counterpart $g_j$, and vice versa. This partial correspondence can be described by a $q \times r$ partial permutation matrix $\bb{\Pi}$ in which now some columns and rows may vanish.

Let us assume that $s \le q$ $f_i$'s have corresponding $g_j$'s. This means that there is no correspondence between $r-s$ rows of $\bb{B}$ and $q-s$ rows of $\bb{A}$, and the relation
$\bb{\Pi} \bb{B} \approx \bb{A} \bb{C}$
holds only for an unknown subset of its rows. The mismatched rows of $\bb{B}$ can be ignored by letting some columns of $\bb{\Pi}$ vanish, meaning that the correspondence is no more surjective. This can be achieved by relaxing the equality constraint $\bb{\Pi}^\Tr \bb{1} = \bb{1}$ in (\ref{eq:prob_linearass}) replacing it with $\bb{\Pi}^\Tr \bb{1} \le \bb{1}$. However, dropping injectivity as well and relaxing $\bb{\Pi} \bb{1} = \bb{1}$ to $\bb{\Pi} \bb{1} \le \bb{1}$ would result in the trivial solution $\bb{\Pi} = \bb{0}$. To overcome this difficulty, we demand every row of $\bb{A}$ to have a matching row in $\bb{B}$, and absorb the $r-s$ mismatches in a row-sparse $q\times n$ outlier matrix $\bb{O}$ that we add to the data term of (\ref{eq:prob_sparse}).
This results in the following problem
\begin{eqnarray}
\min_{\bb{C},\bb{O},\bb{\Pi}} \frac{1}{2} \| \bb{\Pi} \bb{B} - \bb{A} \bb{C} - \bb{O} \|_\mathrm{F}^2 + \lambda \| \bb{W} \odot \bb{C} \|_1 + \mu \| \bb{O} \|_{2,1},
\label{eq:prob_sparse_robust}
\end{eqnarray}
which we refer to as \emph{robust permuted sparse coding}.
The last term involves the $\ell_{2,1}$ norm
\begin{eqnarray}
\| \bb{O} \|_{2,1}  &= & \sum_{i=1}^r \| \bb{o}^\Tr_i \|_2,
\end{eqnarray}
which can be thought of as the $\ell_1$ norm of the vector of the $\ell_2$ norms of the rows $\bb{o}_i^\Tr$ of $\bb{O}$.
The $\ell_{2,1}$ norm promotes row-wise sparsity, allowing to absorb the errors in the data term corresponding to the rows of $\bb{A}$ having no corresponding
rows in $\bb{B}$; the parameter $\mu \ge 0$ controls the amount of regularization.
%
%
The $q \times r$ matrix $\bb{\Pi}$ is searched over all injective correspondences.

As before, problem (\ref{eq:prob_sparse_robust}) is split into two sub-problems, one with the fixed permutation $\bb{\Pi}$,
\begin{eqnarray}
\min_{\bb{C},\bb{O}} \frac{1}{2} \| \bb{B}' - \bb{A} \bb{C} - \bb{O} \|_\mathrm{F}^2 + \lambda \| \bb{W} \odot \bb{C} \|_1 + \mu \| \bb{O} \|_{2,1},
\label{eq:prob_sparse_robust1}
\end{eqnarray}
with $\bb{B}' = \bb{\Pi} \bb{B}$, and the other one with the fixed $\bb{C}$,
\begin{eqnarray}
\max_{\bb{\Pi} \ge \bb{0}} ~ \mathrm{vec}(\bb{E})^\Tr \mathrm{vec}(\bb{\Pi}) & \mathrm{s.t.} & \left\{ \begin{array}{l}
                                                                                                          \bb{\Pi} \bb{1} = \bb{1} \\
                                                                                                          \bb{\Pi}^\Tr \bb{1} \le \bb{1},
                                                                                                        \end{array} \right.
\label{eq:prob_linearass_robust}
\end{eqnarray}
with $\bb{E} = \left(\bb{A} \bb{C} \right) \bb{B}^\Tr$.
Note that an injective correspondence is relaxed into a row-wise stochastic and column-wise sub-stochastic matrix $\bb{\Pi}$.
Proposition~1 simply extends to the robust formulation as well.

\section{Numerical solution}
\label{sec:numerics}

The solution of the robust permuted sparse coding problem (\ref{eq:prob_sparse_robust}) is reduced to alternating two relatively standard optimization problems, and there exist many readily available efficient numerical tools to solve them. For the sake of completeness, we provide a concise description of the involved numerics.

Problem (\ref{eq:prob_linearass_robust}), being a simple linear assignment problem, is solved using the Hungarian algorithm. As an alternative, linear programming can be employed.
To reduce the search space size, we disallow certain impossible permutations such as those relating regions with very distinct sizes.

In order to solve (\ref{eq:prob_sparse_robust1}), we use the family of forward-backward splitting algorithms \cite{nesterov07}
designed for solving unconstrained optimization problems in which the cost function can be split into the sum of two terms,
\begin{equation}
\min_{\bb{x}} h_1(\bb{x}) + h_2(\bb{x}),
\end{equation}
one, $h_1$, convex and differentiable with an $\alpha$-Lipschitz continuous gradient and another, $h_2$, convex extended real valued and possibly non-smooth. Clearly, problem (\ref{eq:prob_sparse_robust1}) falls in this category. 

The forward-backward splitting method with fixed constant step defines a series of iterates, $\{ \bb{x}^k\}_k$,
\begin{equation}
\bb{x}^{k+1} = \prox_{\alpha h_2}\left(\bb{x}^{k} - \frac{1}{\alpha} \nabla h_1(\bb{x}^{k}) \right),
\label{ec.iterates}
\end{equation}
where
\begin{eqnarray}
\prox_{\alpha h_2}(\bb{x}) = \mathrm{arg} \min_{\bb{u}} \, \| \bb{u} - \bb{x} ||_2^2 + \alpha  h_2(\bb{u})
\end{eqnarray}
denotes the proximal operator of $h_2$.
Many alternatives have been studied in the literature to improve the convergence rate of forward-backward splitting algorithms \cite{fista,nesterov07}. Accelerated versions reach quadratic convergence rates (the best possible for the class of first order methods). The discussion of theses methods is beyond of the scope of this paper.

In our case, the objective comprises a quadratic function $h_1 = \| \bb{B}' - \bb{A} \bb{C} - \bb{O} \|_\mathrm{F}^2$ and the non-smooth function $h_2 = \lambda \| \bb{W} \odot \bb{C} \|_1 + \mu \| \bb{O} \|_{2,1}$.
The proximal operator splits into two operators, one in $\bb{C}$ and another one in $\bb{O}$, both having closed forms. The proximal operator corresponding to the $\ell_1$ norm term is given by the weighted soft threshold function
\begin{eqnarray}
\prox_1(\bb{C}) &=& \max\left\{ |\bb{C}| - \frac{\lambda}{\alpha} \bb{W} \right\} \odot \mathrm{sign}(\bb{C}),
\end{eqnarray}
where the absolute value and the sign functions are applied element-wise.
The $i$-th row of the proximal operator corresponding to the $\ell_{2,1}$ norm term is given by
\begin{eqnarray}
(\prox_{2}(\bb{O}))_i = \max\left\{ \| \bb{o}_i^\Tr \|_2 - \frac{\mu}{\alpha}  \right\} \frac{ \bb{o}_i^\Tr }{ \| \bb{o}_i^\Tr \|_2 }.
\end{eqnarray}
The gradient of the quadratic data term with respect to $\bb{C}$ and $\bb{O}$ is given straightforwardly by
\begin{eqnarray}
\nabla_{\bb{C}} h_1 &=&  \bb{A}^\Tr \bb{A} \bb{C} + \bb{A}^\Tr \bb{O} - \bb{A}^\Tr \bb{B}'
\nonumber\\
\nabla_{\bb{O}} h_1 &=& \bb{O}  + \bb{A}\bb{C} - \bb{B}'.
\end{eqnarray}
The Lipschitz constant of the gradient determining the step size is bounded by the maximum eigenvalue
\begin{eqnarray}
\alpha & \le & \lambda_{\mathrm{max}}\left(
                               \begin{array}{cc}
                                 \bb{A}^\Tr \bb{A} & \bb{A}^\Tr \\
                                 \bb{I} & \bb{A} \\
                               \end{array}
                             \right).
\end{eqnarray}
Plugging the above expressions together into (\ref{ec.iterates}) yields the forward-backward splitting optimization
summarized in Algorithm~\ref{alg:F-B}.

\begin{algorithm}[tb]

\SetKwInOut{Input}{input}\SetKwInOut{Output}{output}

\Input{Data $\bb{B}', \bb{A}$; parameters $\lambda, \mu$; step size $\alpha$. }

\Output{Sparse matrix $\bb{O}$ and row-wise sparse outlier matrix $\bb{O}$}

Initialize $\bb{O}^0 = \bb{B}'$ and $\bb{C}^0 = \bb{0}$.

\For{k=1,2,\dots,until convergence}
{

$\bb{C}^{k+1} = \prox_1\left( (\bb{I} - \frac{1}{\alpha} \bb{A}^\Tr \bb{A}  ) \bb{C}^k - \frac{1}{\alpha} \bb{A}^\Tr(\bb{O}^k - \bb{B}')\right)$

$\bb{O}^{k+1} = \prox_2\left( (1 - \frac{1}{\alpha}) \bb{O}^k - \frac{1}{\alpha} ( \bb{A}\bb{C}^k - \bb{B}' ) \right)$

}
\caption{Forward-backward splitting method for the solution of (\ref{eq:prob_sparse_robust1}). \label{alg:F-B} }

\end{algorithm}

\begin{figure*}
   \begin{center}
   \includegraphics[width=\linewidth]{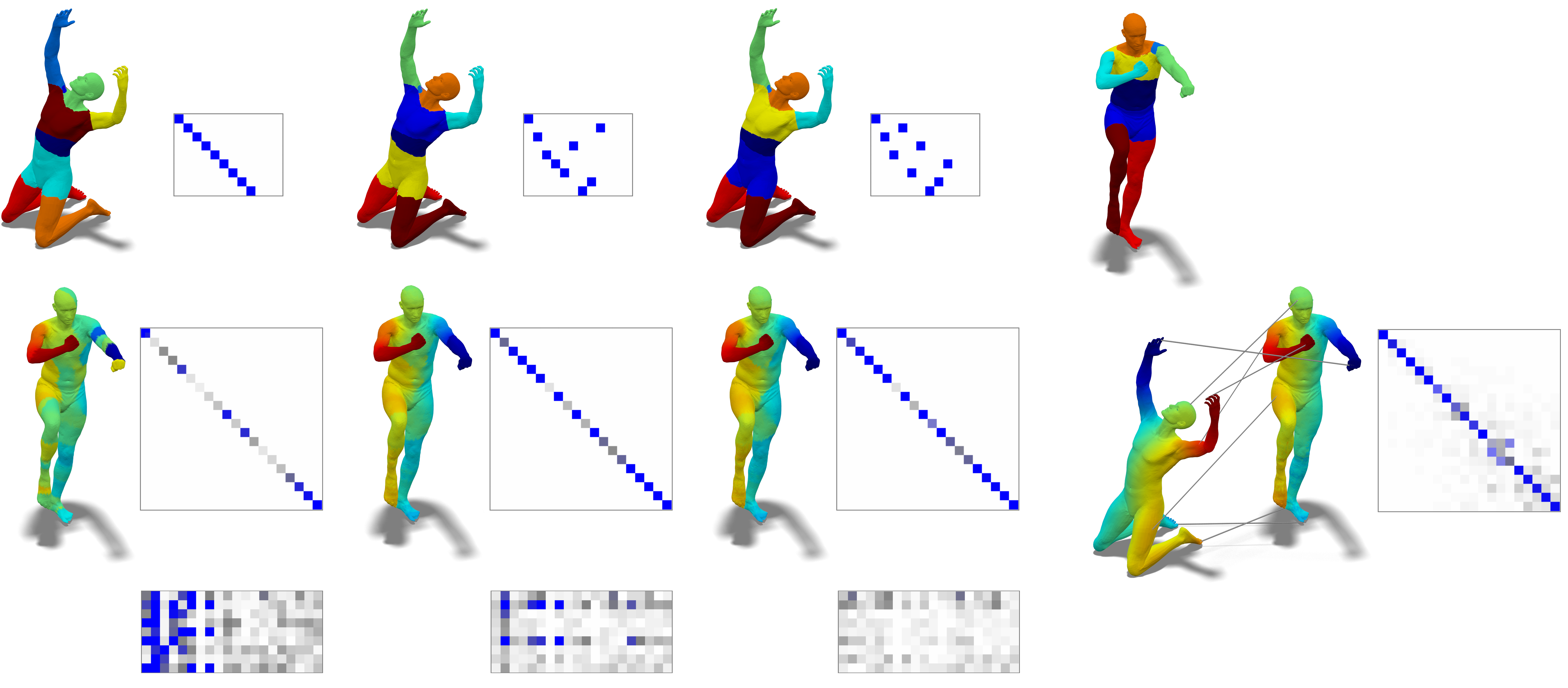}
   \caption{ Outer iterations of robust permuted sparse coding
   alternating the solution of the sparse representation purusit problem (\ref{eq:prob_sparse_robust1}) with
   the linear assignment problem (\ref{eq:prob_linearass_robust}). Three iterations, shown left-to-right, are required to achieve convergence.
   Depicted are the permutation matrix $\bb{\Pi}$ (first row), the correspondence matrix $\bb{C}$ (second row), and the outlier
   matrix $\bb{O}$ (last row). The resulting point-to-point correspondence and the correspondence matrix $\bb{C}$ refined using the ICP
   as described in Section~\ref{sec:p2p} are shown in the rightmost column. }
   \label{fig:iter}
   \end{center}
  \vspace{-1ex}
\end{figure*}

\section{Experimental results}

In order to evaluate our approach, we performed several experiments on data from the TOSCA \cite{bronstein2008ngn}, SHREC'11 \cite{shrec11}
and SCAPE \cite{scapedb}
datasets.
The TOSCA set contains high-quality ($10$K-$50$K vertices) synthetic triangular meshes of humans and animals in different poses with known ground truth correspondences between them.
SHREC'11 contains meshes from the TOSCA set undergoing simulated transformations.
The SCAPE set contains high-resolution ($12$K vertices) scans of a real human in different poses.

For each pair of shapes we calculated the MSERs using $6$-$10$ eigenfunctions and selected regions with areas of at least $5$-$10$\% of the total shape area, resulting in about $5-15$ detected regions (see Figure~\ref{fig:teaser}). These parameters were selected empirically for our data sets.

The segments of each shape were projected onto $20$ eigenfunctions and the corresponding $\bb{C}$ matrix was calculated by solving the sparse coding subproblem (\ref{eq:prob_sparse_robust1}) using an accelerated variant of the method described in Section~\ref{sec:numerics}.
%
The linear assignment subproblem (\ref{eq:prob_linearass}) was solved using the Hungarian method \cite{Kuhn1955}.
We initialized the permutation matrix with $\bb{\Pi} = \frac{1}{q} \bb{1}\bb{1}^\Tr$, and the correspondence matrix with $\bb{C} = \bb{0}$.
We observed a rapid convergence of the alternating minimization procedure in one or two iterations (see Figure~\ref{fig:iter} where for visualization purposed,  $\bb{\Pi}$ was initialized to identity). We found that the method consistently converged to the same solution regardless of the initialization.
Finally, after convergence of the alternating minimization, the resulting $\bb{C}$ was refined using the method described in Section~\ref{sec:p2p}. 

The robustness of the method is demonstrated in Figures~\ref{fig:tosca}--\ref{fig:p2p}; correct correspondences are computed
even when the shapes undergo non-isometric deformations and are contaminated by geometric or topological noise.
Figure~\ref{fig:benchmark} shows a quantitative evaluation and comparison of our algorithm to other correspondence algorithms on the SCAPE data set.
The evaluation was performed using the code and data from \cite{kim2011blended}. Our method outperforms existing methods while making less assumption and working only with intrinsic information.

\begin{figure}[tb]
   \begin{center}
   \includegraphics[width=\columnwidth]{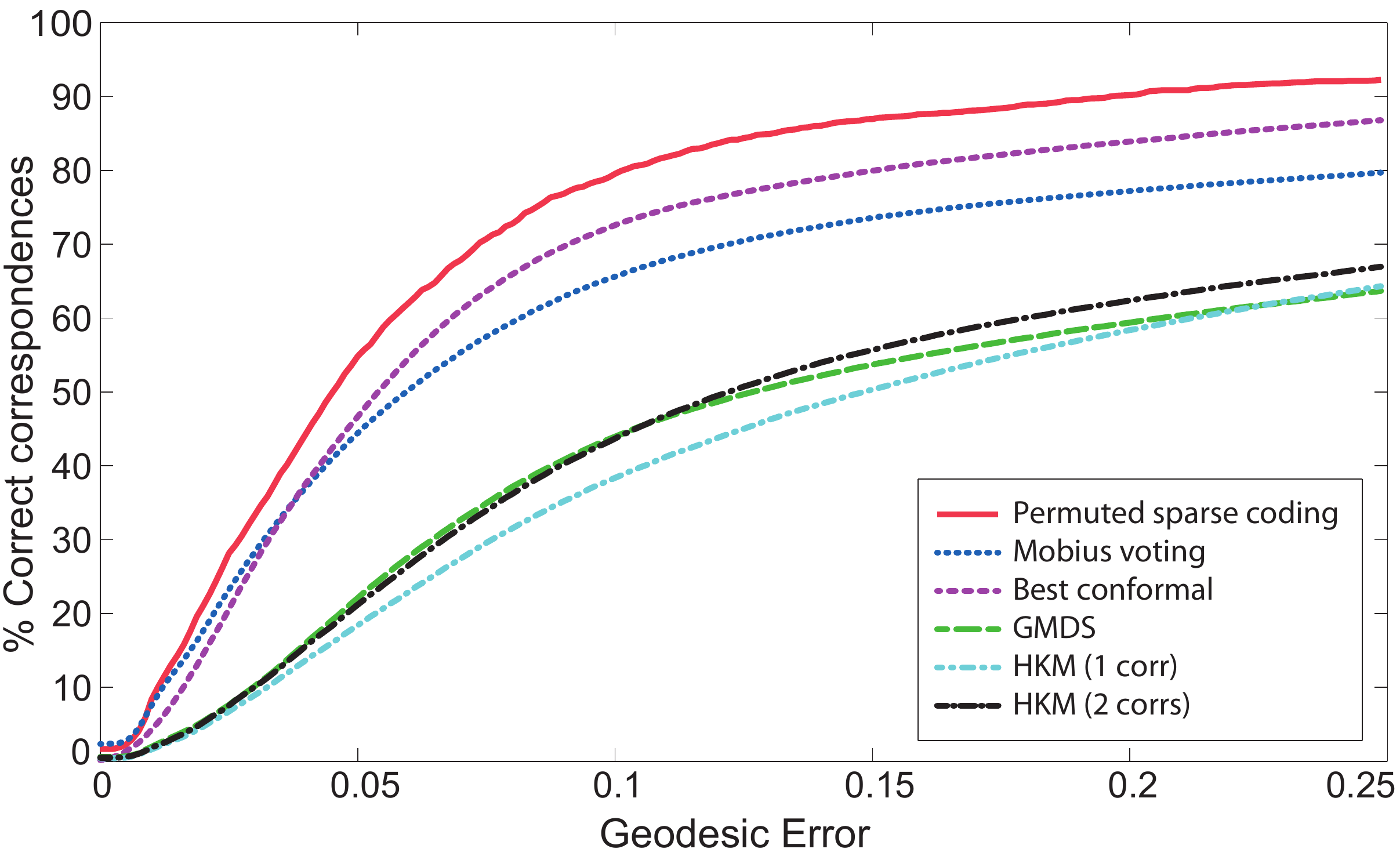}
   \caption{ Quantitative evaluation of the proposed shape correspondence algorithm and its comparison to
   other correspondence algorithms on the SCAPE shapes using the evaluation protocol from \cite{kim2011blended}. }
   \label{fig:benchmark}
   \end{center}
   \vspace{-1ex}
\end{figure}

The code used in the experiments was implemented in Matalb with parts written in C. The approximate nearest neighbor search in the ICP refinement step was accelerated using the FLANN library. The experiments were run on a 2.4GHz Intel Xeon CPU. End-to-end execution time varied from $10$ to $50$ seconds, with the detailed breakdown summarized in Table~\ref{tab:runtimes}.

\begin{table}[t]
\caption{\label{tab:runtimes} \small{Average runtime (in seconds) as a function of the shape size for different stages in the proposed method:
Basis -- harmonic basis computation; MSER -- region detection; Opt. -- alternating minimization procedure;
Ref. -- ICP-based refinement and point-to-point correspondence computation; Tot. -- total runtime. }}
   \begin{center}
      \begin{tabular}{lcccccc}
         \hline
         \hline
         \textbf{Vertices} & \textbf{Basis} & \textbf{MSER} & \textbf{Opt.} & \textbf{Ref.} & \textbf{Tot.}\\
         \hline
         ~~5K      &  0.53  &  0.61 & 7.80 &  1.41 & \textbf{10.35} \\
         10K       &  0.99  &  1.32 & 7.91 &  2.70 & \textbf{12.92} \\
         20K       &  2.03  &  3.58 & 7.91 &  5.52 & \textbf{19.04} \\
         50K       &  5.57  & 14.23 & 7.85 & 13.99 & \textbf{41.64} \\
         \hline
      \end{tabular}
   \end{center}
   \vspace{-2ex}
\end{table}

\section{Discussion and Conclusion}

In this paper, we posed the problem of finding intrinsic correspondence between near-isometric deformable shapes as a problem of sparse modeling.
Given only two set of regions in the two shapes with unknown correspondence, we are able to infer a dense correspondence between the shapes from two assumptions: that at least some of the regions in the two sets are corresponding; and that the shapes are nearly-isometric.
The latter assumption implies that in functional representation in harmonic bases the unknown correspondence between the shapes is modeled as a sparse nearly-diagonal matrix; the former assumption implies that there exists an unknown permutation that reorders the regions in corresponding order.
To find both the permutation and the correspondence, we formulate the novel permuted sparse coding problem and propose its efficient solution. An additional sparse coding term addressing outliers is added to the model for handling partial matching and formulated as the robust permuted sparse coding.

To the best of our knowledge, among other dense correspondence techniques, our method relies on the smallest amount of information (the ability to find some repeatable regions) and quite generic assumption (near-isometric shapes).
In particular, it allows us to use only a region detector without a feature descriptor to find a high-quality correspondence between two shapes.

We note that, as in \cite{ovsjanikov2012functional}, we explicitly assume that the shapes are nearly isometric, and that their Laplacians have simple spectrum. 
This assumption assures that the Laplacian eigenbases $\bb{\Phi}$ and $\bb{\Psi}$ have a compatible behavior, and as a result $\bb{C}$ has a nearly-diagonal structure.
If we try to relax the restriction on multiplicity, $\bb{C}$ will still be sparse, but with unknown sparse structure.
We can can still use our problem in this setting, imposing a different sparsity constraint on $\bb{C}$.

Relaxing the assumptions even more, we can depart from the near-isometric model, e.g. considering applications where one wishes to match shapes with roughly similar geometry but very different details (such as a horse and an elephant).
In such a generic setting, the Laplacian eigenbases may differ dramatically, and thus $\bb{C}$ have a non-sparse structure.
It is possible to incorporate the bases $\bb{\Phi}$ and $\bb{\Psi}$ as variables into our problem, and in addition to finding the permutation $\bb{\Pi}$ and correspondence $\bb{C}$ find also the bases in which $\bb{C}$ will have a diagonal structure. This problem is akin to dictionary learning used in the sparse modeling literature.
In future research, we will study such a generalization of our framework in hope to find correspondences between non-isometric shapes.
Another possible generalization of our problem is for finding correspondence between collections of shapes
\cite{nguyen2011optimization,kim2011blended}. 

Finally, it worthwhile noting that the novel structured sparse modeling techniques introduced in \cite{sprech12icml} provide an alternative to the optimization-based
pursuit by replacing the iterative proximal algorithm with a learned fixed-complexity feed-forward network. Approaching shape correspondence as a learning problem from this perspective seems a very attractive future research direction.

\section*{Appendix A -- Proof of Proposition~1}

The permuted sparse coding problem
\begin{eqnarray}
\min_{\bb{C} \in \mathbb{R}^{q\times q},\bb{\Pi} \in \mathbb{P}(q) } \frac{1}{2} \| \bb{\Pi} \bb{B} - \bb{A} \bb{C}\|_\mathrm{F}^2
+ \lambda \| \bb{W} \odot \bb{C} \|,
\label{eq:psc}
\end{eqnarray}
where $\mathbb{P}(q)$ denotes the space of $q \times q$ permutation matrices, is non-convex since the feasible set is non-convex.
However, by relaxing $\mathbb{P}(q)$ to the bigger space of $q\times q$ double-stochastic matrices, $\mathbb{S}(q) = \{ \bb{\Pi} \in \mathbb{R}^{q\times q}_+ : \bb{\Pi} \bb{1} = \bb{\Pi}^\Tr \bb{1} = \bb{1} \} \supset \mathbb{P}(q)$, we obtain the problem
\begin{eqnarray}
\min_{\bb{C} \in \mathbb{R}^{q\times q},\bb{\Pi} \in \mathbb{S}(q) } \frac{1}{2} \| \bb{\Pi} \bb{B} - \bb{A} \bb{C}\|_\mathrm{F}^2
+ \lambda \| \bb{W} \odot \bb{C} \|
\label{eq:psc_relaxed}
\end{eqnarray}
that is easily shown to be convex.
Fixing one of the variables at a time, the problem can be split into two subproblems: the sparse coding problem
\begin{eqnarray}
\min_{\bb{C}} \frac{1}{2} \| \bb{B}' - \bb{A} \bb{C}\|_\mathrm{F}^2
+ \lambda \| \bb{W} \odot \bb{C} \|_1
\label{eq:sc}
\end{eqnarray}
with $\bb{B}' = \bb{\Pi}\bb{B}$, and the linear assignment problem
\begin{eqnarray}
\max_{\bb{\Pi} \in \mathbb{S}(q) } ~ \mathrm{vec}(\bb{E})^\Tr \mathrm{vec}(\bb{\Pi})
\label{eq:linass}
\end{eqnarray}
with $\bb{E} = \bb{A}' \bb{B}^\Tr = \bb{A} \bb{C} \bb{B}^\Tr$.
These two subproblems can be viewed as minimizing the objective function of (\ref{eq:psc_relaxed}) with respect to two blocks of coordinates,
$\bb{C}$ and $\bb{\Pi}$. The minimization process alternating between the solution of (\ref{eq:sc}) and (\ref{eq:linass}) can be therefore
regarded as block-coordinate descent.

We use an instance of Theorem~4.1 in \cite{tseng2001convergence} stating that block-coordinate descent is guaranteed to converge
to a global minimizer of a non-differentiable convex function.
The conditions of the theorem are satisfied by the objective and the constraints of (\ref{eq:psc_relaxed}). Note that since we do not prove strict convexity of the latter objective, a global minimizer is not necessarily unique.

Let us now have a closer look at the linear assignment problem (\ref{eq:linass}) that can be cast as the linear program
\begin{eqnarray}
\min_{\bb{\pi} \in \mathbb{R}^{q^2}_+ } ~ \bb{e}^\Tr \bb{\pi} &\mathrm{s.t.}&  \bb{Q} \bb{\pi} = \bb{1},
\label{eq:linprog}
\end{eqnarray}
where the variable vector $\bb{\pi} = \mathrm{vec}(\bb{\Pi})$ is the column stack of the assignment matrix,
the cost vector is given by $\bb{e} = -\mathrm{vec}(\bb{E})$, and the constraint matrix can be expressed using the Kr\"{o}necker product notation as
\begin{eqnarray}
\bb{Q} &=& \left(
             \begin{array}{c}
               \bb{1}^\Tr \otimes \bb{I} \\
               \bb{I} \otimes \bb{1}^\Tr  \\
             \end{array}
           \right).
           \label{eq:Q}
\end{eqnarray}
Here, $\bb{I}$ stands for the $q \times q$ identity matrix, and $\bb{1}^\Tr$ for the $1\times q$ vector of ones.
In Lemma~1 below, we prove that $\bb{Q}$ is totally unimodular, i.e., any of its square submatrices has the determinant of $0$ or $\pm 1$. This property guarantees that
the linear program has a solution with integer coordinates. Since $\mathbb{S}(q) \cap \mathbb{Z}^{q \times q} = \mathbb{P}(q)$,
this guarantees that the linear assignment problem (\ref{eq:linass}) is actually equivalent to the binary assignment problem
\begin{eqnarray}
\max_{\bb{\Pi} \in \mathbb{P}(q) } ~ \mathrm{vec}(\bb{E})^\Tr \mathrm{vec}(\bb{\Pi}).
\label{eq:linass1}
\end{eqnarray}
Combined this result with the convergence of the block-coordinate descent to a global minimizer of (\ref{eq:psc_relaxed}), we can guarantee that in the obtained
minimizer $\bb{\Pi}$ actually belongs to $\mathbb{P}(q)$. This implies that the block-coordinate descent converges to a global minimizer of (\ref{eq:psc}).

\begin{lemma}
The matrix $\bb{Q}$ in (\ref{eq:Q}) is totally unimodular.
\end{lemma}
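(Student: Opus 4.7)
The plan is to recognize $\bb{Q}$ as the vertex-edge incidence matrix of a bipartite graph and then apply the classical total unimodularity argument for such matrices. Specifically, each column of $\bb{Q}$ corresponds to one entry $\pi_{ij}$ of the $q\times q$ assignment matrix; the Kronecker-product structure shows that this column has exactly one $1$ in the top block (the row-stochasticity constraint for row $i$) and exactly one $1$ in the bottom block (the column-stochasticity constraint for column $j$), and all other entries are $0$. Hence $\bb{Q}$ is the incidence matrix of the complete bipartite graph $K_{q,q}$ with $2q$ vertices and $q^2$ edges.

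I will then prove total unimodularity by induction on the size $k$ of square submatrices $\bb{M}$ of $\bb{Q}$. The base case $k=1$ is trivial since entries are in $\{0,1\}$. For the inductive step I would consider three cases on a $k\times k$ submatrix $\bb{M}$: (i) if some column of $\bb{M}$ is identically zero, then $\det\bb{M}=0$; (ii) if some column of $\bb{M}$ contains a single $1$, expand the determinant along that column, so $\det\bb{M}=\pm\det\bb{M}'$ where $\bb{M}'$ is a $(k-1)\times(k-1)$ submatrix and the inductive hypothesis gives $\det\bb{M}'\in\{0,\pm1\}$; (iii) if every column of $\bb{M}$ contains exactly two $1$'s, then since every column of $\bb{Q}$ has precisely one $1$ in the top block and one in the bottom block, the same holds for $\bb{M}$, so summing the rows of $\bb{M}$ coming from the top block gives the all-ones vector across the columns, and likewise for the bottom block. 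Consequently, the top-block rows and bottom-block rows of $\bb{M}$ have equal sums, which forces a nontrivial linear dependence among the rows and hence $\det\bb{M}=0$.

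These three cases exhaust all possibilities since each column of $\bb{Q}$ has at most two $1$'s; this completes the induction and establishes that every square submatrix of $\bb{Q}$ has determinant in $\{0,\pm1\}$, i.e., $\bb{Q}$ is totally unimodular. The only step that requires any care is case (iii), where one must verify the sign-free row-dependence statement; this is precisely where the bipartite structure (one $1$ per column in each of the two blocks) is used, and it is the reason the argument would fail for incidence matrices of general (non-bipartite) graphs. No other machinery is needed, so I would not anticipate substantial technical obstacles.
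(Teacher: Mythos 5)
Your argument is correct, and it takes a genuinely different route from the paper's. You recognize $\bb{Q}$ as the vertex--edge incidence matrix of the complete bipartite graph $K_{q,q}$ (each column has exactly one $1$ in the block $\bb{1}^\Tr\otimes\bb{I}$ and one $1$ in the block $\bb{I}\otimes\bb{1}^\Tr$) and run the classical induction on the submatrix size: a zero column gives determinant $0$, a column with a single $1$ reduces to a smaller minor by cofactor expansion, and if every column has two $1$'s then the sum of the selected top-block rows equals the sum of the selected bottom-block rows (both are all-ones over the selected columns; note both sets are necessarily nonempty in this case), yielding a nontrivial dependence and determinant $0$ --- precisely the step where bipartiteness enters, as you observe. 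The paper instead embeds $\bb{Q}$ as a submatrix of the Kronecker product $\overline{\bb{Q}}$ of the two factors $\left(\begin{smallmatrix}\bb{1}^\Tr\\ \bb{I}\end{smallmatrix}\right)$ and $\left(\begin{smallmatrix}\bb{I}\\ \bb{1}^\Tr\end{smallmatrix}\right)$, proves each factor totally unimodular by a direct case analysis, and invokes the claim that total unimodularity is preserved by Kronecker products. Your approach buys a self-contained proof resting only on standard facts, which is a real advantage here: the Kronecker closure property used by the paper is false in general --- for instance, with the totally unimodular matrices $A=\left(\begin{smallmatrix}1&0\\ 1&1\\ 1&1\end{smallmatrix}\right)$ and $B=\left(\begin{smallmatrix}1&1&1\\ 0&1&1\\ 1&0&1\end{smallmatrix}\right)$, the product $A\otimes B$ contains the $3\times 3$ submatrix $\left(\begin{smallmatrix}1&1&0\\ 0&1&1\\ 1&0&1\end{smallmatrix}\right)$ (rows $(1,1),(2,2),(3,3)$, columns $(1,1),(1,2),(2,3)$) of determinant $2$ --- so the paper's factorization, while reducing the problem to two very small matrices, leans on an unsubstantiated closure claim, whereas your direct incidence-matrix argument establishes the lemma rigorously.
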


\begin{proof}
The matrix $\bb{Q}$ can be constructed as the sub-matrix of a bigger matrix
\begin{eqnarray}
\bb{\overline{Q}} &=& \left(
             \begin{array}{c}
               \bb{1}^\Tr  \\
               \bb{I}   \\
             \end{array}
           \right) \otimes \left(
             \begin{array}{c}
               \bb{I}  \\
               \bb{1}^\Tr   \\
             \end{array}
           \right).
\end{eqnarray}
Hence, total unimodularity of $\bb{\overline{Q}}$ implies total unimodularity of $\bb{Q}$.
Since total unimodularity is preserved by the Kr\"{o}necker product, it is sufficient to show that each of the two
Kr\"{o}necker factors are totally unimodular. We will limit the discussion to the second factor; very similar arguments apply to the first one.

The matrix
\begin{eqnarray}
\bb{R} &=&  \left(
             \begin{array}{c}
               \bb{I}  \\
               \bb{1}^\Tr   \\
             \end{array}
           \right)
\end{eqnarray}
comprises two components: the identity matrix $\bb{I}$ and the row vector $\bb{1}^\Tr$, both of which are totally unimodular. For each $k \times k$ square submatrix $\bb{R}'$ of $\bb{R}$, we distinguish between the following three cases: $\bb{R}'$ containing only elements of $\bb{I}$; $\bb{R}'$ containing only elements of $\bb{1}^\Tr$; and $\bb{R}'$ containing elements of both components. In the two former cases, $\det \bb{R}' \in \{0,\pm 1\}$ since $\bb{I}$ and $\bb{1}^\Tr$ are totally unimodular. In the latter case, the submatrix has the form
\begin{eqnarray}
\bb{R}' &=&  \left(
             \begin{array}{cc}
              \bb{O} & \bb{I} \\
              \bb{O} & \bb{O}  \\
               \bb{1} & \bb{1}^\Tr    \\
             \end{array}
           \right),
\end{eqnarray}
where the $m \times m$ ($m < k$) identity matrix $\bb{I}$ is surrounded by zeros and concatenated to a row of ones on the bottom. For $m<k-1$, the submatrix contains at least one row of zeros and therefore $\det \bb{R}' = 0$. For $m=k-1$, $\bb{R}'$ assumes the form
\begin{eqnarray}
\bb{R}' &=&  \left(
             \begin{array}{cc}
               \bb{0} & \bb{I}  \\
               1 & \bb{1}^\Tr     \\
             \end{array}
           \right),
\end{eqnarray}
where $\bb{I}$ is the $(k-1) \times (k-1)$ identity matrix, $\bb{0}$ is the $(k-1) \times 1$ vector of zeros, and $\bb{1}^\Tr$ is the $1 \times (k-1)$ vector of ones.
Using the properties of the determinant, we obtain
\begin{eqnarray}
\det \bb{R}' &=& \det (\bb{0} \bb{1}^\Tr - \bb{I}) = (-1)^{k-1}.
\end{eqnarray}
Hence, $\bb{R}$ is totally unimodular.
\end{proof} 

\bibliographystyle{alpha}
\bibliography{optimization,shapes}

\newcommand{\etalchar}[1]{$^{#1}$}
\begin{thebibliography}{NBCW{\etalchar{+}}11}

\bibitem[ASC11]{aubry2011wave}
M.~Aubry, U.~Schlickewei, and D.~Cremers.
\newblock The wave kernel signature: a quantum mechanical approach to shape
  analysis.
\newblock In {\em Proc. Workshop on Dynamic Shape Capture and Analysis}, 2011.

\bibitem[ASK{\etalchar{+}}05]{scapedb}
D.~Anguelov, P.~Srinivasan, D.~Koller, S.~Thrun, J.~Rodgers, , and J.~Davis.
\newblock Scape: shape completion and animation of people.
\newblock In {\em Proceedings of the SIGGRAPH Conference}, 2005.

\bibitem[BBK06]{bro:bro:kim:PNAS}
A.~M. Bronstein, M.~M. Bronstein, and R.~Kimmel.
\newblock Generalized multidimensional scaling: a framework for
  isometry-invariant partial surface matching.
\newblock {\em PNAS}, 103(5):1168--1172, 2006.

\bibitem[BBK08]{bronstein2008ngn}
A.~M. Bronstein, M.~M. Bronstein, and R.~Kimmel.
\newblock {\em {Numerical geometry of non-rigid shapes}}.
\newblock Springer-Verlag New York Inc, 2008.

\bibitem[BBK{\etalchar{+}}10]{bro-ghf}
A.~M. Bronstein, M.~M. Bronstein, R.~Kimmel, M.~Mahmoudi, and G.~Sapiro.
\newblock A {Gromov-Hausdorff} framework with diffusion geometry for
  topologically-robust non-rigid shape matching.
\newblock {\em IJCV}, 89(2-3):266--286, 2010.

\bibitem[BM92]{BeslMcKay:92:ICP}
P.~J. Besl and N.~D. McKay.
\newblock A method for registration of {3D} shapes.
\newblock {\em Trans. PAMI}, 14:239--256, 1992.

\bibitem[BT09]{fista}
A.~Beck and M.~Teboulle.
\newblock A fast iterative shrinkage-thresholding algorithm for linear inverse
  problems.
\newblock {\em SIAM J. Img. Sci.}, 2:183--202, March 2009.

\bibitem[CM91]{ChenMedioni:91:ICP}
Y.~Chen and G.~Medioni.
\newblock Object modeling by registration of multiple range images.
\newblock In {\em Proc. Conf. Robotics and Automation}, 1991.

\bibitem[DMAMS10]{digne2010level}
J.~Digne, J.~M. Morel, N.~Audfray, and C.~Mehdi-Souzani.
\newblock The level set tree on meshes.
\newblock In {\em Proc. 3DPVT}, 2010.

\bibitem[EB]{shrec11}
M.~M. Bronstein B. Bustos T. Darom R. Horaud I. Hotz Y. Keller J. Keustermans
  A. Kovnatsky R. Litman J. Reininghaus I. Sipiran D. Smeets P. Suetens D.
  Vandermeulen A. Zaharescu V.~Zobel E.~Boyer, A. M.~Bronstein.
\newblock Shrec 2011: robust feature detection and description benchmark.
\newblock In {\em EUROGRAPHICS Workshop on 3D Object Retrieval (3DOR)}.

\bibitem[EK01]{ela:kim:FLATTEN}
A.~Elad and R.~Kimmel.
\newblock Bending invariant representations for surfaces.
\newblock In {\em Proc. CVPR}, pages 168--174, 2001.

\bibitem[Ela10]{elad2010sparse}
M.~Elad.
\newblock {\em Sparse and Redundant Representations: From Theory to
  Applications in Signal and Image Processing}.
\newblock Springer, 2010.

\bibitem[GBAL09]{gebal:andreas:etal:adf:09}
K.~Gebal, J.~Andreas B{\ae}rentzen, H.~Aan{\ae}s, and R.~Larsen.
\newblock Shape analysis using the auto diffusion function.
\newblock {\em Comput. Graph. Forum}, 28(5):1405--1413, 2009.

\bibitem[GF09]{golovinskiy2009consistent}
A.~Golovinskiy and T.~Funkhouser.
\newblock Consistent segmentation of 3d models.
\newblock {\em Computers \& Graphics}, 33(3):262--269, 2009.

\bibitem[HKG11]{huang2011joint}
Q.~Huang, V.~Koltun, and L.~Guibas.
\newblock Joint shape segmentation with linear programming.
\newblock In {\em TOG}, volume~30, page 125, 2011.

\bibitem[KLF11]{kim2011blended}
V.~G. Kim, Y.~Lipman, and T.~Funkhouser.
\newblock Blended intrinsic maps.
\newblock {\em TOG}, 30(4):79, 2011.

\bibitem[Kuh55]{Kuhn1955}
Harold~W. Kuhn.
\newblock The hungarian method for the assignment problem.
\newblock {\em Naval Research Logistics Quarterly}, 2:83--97, 1955.

\bibitem[KZHCO10]{kaick2010survey}
O.V. Kaick, H.~Zhang, G.~Hamarneh, and D.~Cohen-Or.
\newblock A survey on shape correspondence.
\newblock In {\em Computer Graphics Forum}, volume~20, pages 1--23, 2010.

\bibitem[LBB11]{litman:BB:11}
R.~Litman, A.~M. Bronstein, and M.~M. Bronstein.
\newblock Diffusion-geometric maximally stable component detection in
  deformable shapes.
\newblock {\em Computers \& Graphics}, 35(3):549 -- 560, 2011.

\bibitem[L{\'e}v06]{levy2006laplace}
B.~L{\'e}vy.
\newblock Laplace-beltrami eigenfunctions towards an algorithm that understands
  geometry.
\newblock In {\em Proc. SMI}, 2006.

\bibitem[LF09]{Lipman09}
Y.~Lipman and T.~Funkhouser.
\newblock Mobius voting for surface correspondence.
\newblock {\em ACM Transactions on Graphics (Proc. SIGGRAPH)}, 28(3), August
  2009.

\bibitem[MHK{\etalchar{+}}08]{Mateus08}
Diana Mateus, Radu~P. Horaud, David Knossow, Fabio Cuzzolin, and Edmond Boyer.
\newblock Articulated shape matching using laplacian eigenfunctions and
  unsupervised point registration.
\newblock In {\em Proc. CVPR}, 2008.

\bibitem[MS05]{Memoli05}
Facundo Memoli and G.~Sapiro.
\newblock A theoretical and computational framework for isometry invariant
  recognition of point cloud data.
\newblock {\em Found. Comput. Math.}, 5(3):313--347, 2005.

\bibitem[NBCW{\etalchar{+}}11]{nguyen2011optimization}
A.~Nguyen, M.~Ben-Chen, K.~Welnicka, Y.~Ye, and L.~Guibas.
\newblock An optimization approach to improving collections of shape maps.
\newblock In {\em Computer Graphics Forum}, volume~30, pages 1481--1491, 2011.

\bibitem[Nes07]{nesterov07}
Y.~Nesterov.
\newblock Gradient methods for minimizing composite objective function.
\newblock In {\em {CORE} Discussion Paper 2007/76, Center for Operations
  Research and Econometrics ({CORE})}. Catholic University of Louvain,
  Louvain-la-Neuve, Belgium, 2007.

\bibitem[OBCS{\etalchar{+}}12]{ovsjanikov2012functional}
M.~Ovsjanikov, M.~Ben-Chen, J.~Solomon, A.~Butscher, and L.~Guibas.
\newblock Functional maps: A flexible representation of maps between shapes.
\newblock {\em TOG}, 31(4), 2012.

\bibitem[OMMG10]{ovsjanikov2010one}
M.~Ovsjanikov, Q.~M{\'e}rigot, F.~M{\'e}moli, and L.~Guibas.
\newblock One point isometric matching with the heat kernel.
\newblock In {\em Computer Graphics Forum}, volume~29, pages 1555--1564, 2010.

\bibitem[PBB11]{pokrass2012correspondence}
J.~Pokrass, A.~M. Bronstein, and M.~M. Bronstein.
\newblock A correspondence-less approach to matching of deformable shapes.
\newblock In {\em Proc. SSVM}, 2011.

\bibitem[Rus07]{Rustamov07}
R.~M. Rustamov.
\newblock Laplace-beltrami eigenfunctions for deformation inavriant shape
  representation.
\newblock In {\em Proc. of SGP}, pages 225--233, 2007.

\bibitem[SBS12]{sprech12icml}
P.~Sprechmann, A.~M. Bronstein, and G.~Sapiro.
\newblock Learning efficient structured sparse models.
\newblock In {\em Proc. ICML}, 2012.

\bibitem[SOG09]{sunHKS}
J.~Sun, M.~Ovsjanikov, and L.~J. Guibas.
\newblock A concise and provably informative multi-scale signature based on
  heat diffusion.
\newblock In {\em Proc. SGP}, 2009.

\bibitem[SY12]{sahillioglu2012}
Y.~Sahillioglu and Y.~Yemez.
\newblock Coarse-to-fine combinatorial matching for dense isometric shape
  correspondence.
\newblock In {\em Computer Graphics Forum}, 2012.

\bibitem[TBW{\etalchar{+}}11]{tevs2011intrinsic}
A.~Tevs, A.~Berner, M.~Wand, I.~Ihrke, and H.P. Seidel.
\newblock Intrinsic shape matching by planned landmark sampling.
\newblock In {\em Computer Graphics Forum}, volume~30, pages 543--552, 2011.

\bibitem[Tib96]{tibshirani96}
R.~Tibshirani.
\newblock Regression shrinkage and selection via the {LASSO}.
\newblock {\em J. Royal Stat. Society: Series B}, 58(1):267--288, 1996.

\bibitem[Tse01]{tseng2001convergence}
P.~Tseng.
\newblock {Convergence of a block coordinate descent method for
  nondifferentiable minimization}.
\newblock {\em Journal of optimization theory and applications},
  109(3):475--494, 2001.

\bibitem[VKTS{\etalchar{+}}11]{van2011prior}
O.~Van~Kaick, A.~Tagliasacchi, O.~Sidi, H.~Zhang, D.~Cohen-Or, L.~Wolf, and
  G.~Hamarneh.
\newblock Prior knowledge for part correspondence.
\newblock In {\em Computer Graphics Forum}, volume~30, pages 553--562, 2011.

\bibitem[ZBVH09]{zaharescu-surface}
A.~Zaharescu, E.~Boyer, K.~Varanasi, and R~Horaud.
\newblock Surface feature detection and description with applications to mesh
  matching.
\newblock In {\em Proc. CVPR}, 2009.

\bibitem[ZWW{\etalchar{+}}10]{zeng2010dense}
Y.~Zeng, C.~Wang, Y.~Wang, X.~Gu, D.~Samaras, and N.~Paragios.
\newblock Dense non-rigid surface registration using high-order graph matching.
\newblock In {\em Proc. CVPR}, 2010.

\end{thebibliography}

\vfill

\end{document}